\def\ShowComment{True} 
\definecolor{ForestGreen}{rgb}{0.1333,0.5451,0.1333}
\definecolor{DarkRed}{rgb}{0.65,0,0}
\definecolor{Red}{rgb}{1,0,0}
\Crefname{algocf}{Algorithm}{Algorithms}
\g@addto@macro{\maketitle}{\@thanks}
\newtheorem{theorem}{Theorem}[section]
\newtheorem{corollary}[theorem]{Corollary}
\newtheorem{lemma}[theorem]{Lemma}
\newtheorem{definition}[theorem]{Definition}
\newcommand{\p}{\textsc{P}}%
\newcommand{\poly}{\mathrm{poly}}
\renewcommand{\algorithmiccomment}[1]{\bgroup\hfill$\rhd$~#1\egroup}
\newcounter{note}[section]
\newcommand{\Otil}{\tilde{O}}
\global\long\def\poly{\mathrm{poly}}
\newenvironment{wrapper}[1]
{
	\begin{center}
		\begin{minipage}{\linewidth}
			\begin{mdframed}[hidealllines=true, backgroundcolor=gray!20, leftmargin=0cm,innerleftmargin=0.4cm,innerrightmargin=0.4cm,innertopmargin=0.4cm,innerbottommargin=0.4cm,roundcorner=10pt]
				#1}
			{\end{mdframed}
		\end{minipage}
	\end{center}
} 
\renewcommand{\paragraph}[1]{\medskip\noindent\textbf{#1}}
\def\thatchaphol#1{\marginpar{$\leftarrow$\fbox{T}}\footnote{$\Rightarrow$~{\sf\textcolor{ForestGreen}{#1 --Thatchaphol}}}}
\def\sayan#1{\marginpar{$\leftarrow$\fbox{S}}\footnote{$\Rightarrow$~{\sf\textcolor{purple}{#1 --Sayan}}}}
\def\peter#1{\marginpar{$\leftarrow$\fbox{P}}\footnote{$\Rightarrow$~{\sf\textcolor{green}{#1 --Peter}}}}
\def\thatchaphol#1{}
\def\sayan#1{}
\def\peter#1{}
\newcommand{\ALGtikzmarkcolor}{black}
\newcommand{\ALGtikzmarkextraindent}{4pt}
\newcommand{\ALGtikzmarkverticaloffsetstart}{-.5ex}
\newcommand{\ALGtikzmarkverticaloffsetend}{-.5ex}
\newcounter{ALG@tikzmark@tempcnta}
\newcommand\ALG@tikzmark@start{%
	\global\let\ALG@tikzmark@last\ALG@tikzmark@starttext%
	\expandafter\edef\csname ALG@tikzmark@\theALG@nested\endcsname{\theALG@tikzmark@tempcnta}%
	\tikzmark{ALG@tikzmark@start@\csname ALG@tikzmark@\theALG@nested\endcsname}%
	\addtocounter{ALG@tikzmark@tempcnta}{1}%
}
\def\ALG@tikzmark@starttext{start}
\newcommand\ALG@tikzmark@end{%
	\ifx\ALG@tikzmark@last\ALG@tikzmark@starttext
	\else
	\tikzmark{ALG@tikzmark@end@\csname ALG@tikzmark@\theALG@nested\endcsname}%
	\tikz[overlay,remember picture] \draw[\ALGtikzmarkcolor] let \p{S}=($(pic cs:ALG@tikzmark@start@\csname ALG@tikzmark@\theALG@nested\endcsname)+(\ALGtikzmarkextraindent,\ALGtikzmarkverticaloffsetstart)$), \p{E}=($(pic cs:ALG@tikzmark@end@\csname ALG@tikzmark@\theALG@nested\endcsname)+(\ALGtikzmarkextraindent,\ALGtikzmarkverticaloffsetend)$) in (\x{S},\y{S})--(\x{S},\y{E});%
	\fi
	\gdef\ALG@tikzmark@last{end}%
}
\apptocmd{\ALG@beginblock}{\ALG@tikzmark@start}{}{\errmessage{failed to patch}}
\pretocmd{\ALG@endblock}{\ALG@tikzmark@end}{}{\errmessage{failed to patch}}
\newcommand\abs[1]{\left\lvert #1 \right\rvert}
\newcommand\vol[0]{\mathrm{vol}}
\global\long\def\Otil{\tilde{O}}
\newcommand\Main{\textsc{Main}}
\title{Maximal \texorpdfstring{$k$}{k}-Edge-Connected Subgraphs in Almost-Linear Time\\ for Small \texorpdfstring{$k$}{k}}
\author[1]{Thatchaphol Saranurak\thanks{Supported by NSF CAREER grant 2238138.}}
\author[2]{Wuwei Yuan}
\affil[1]{University of Michigan}
\affil[2]{Institute for Interdisciplinary Information Sciences, Tsinghua University}
\date{}
\begin{document}

\maketitle
\pagenumbering{gobble}
\begin{abstract}
We give the first almost-linear time algorithm for computing the \emph{maximal $k$-edge-connected subgraphs} of an undirected unweighted graph for any constant $k$. More specifically, given an $n$-vertex $m$-edge graph $G=(V,E)$ and a number $k = \log^{o(1)}n$, we can deterministically compute in $O(m+n^{1+o(1)})$ time the unique vertex partition $\{V_{1},\dots,V_{z}\}$ such that, for every $i$, $V_{i}$ induces a $k$-edge-connected subgraph while every superset $V'_{i}\supset V_{i}$ does not. Previous algorithms with linear time work only when $k\le2$ {[}Tarjan SICOMP'72{]}, otherwise they all require $\Omega(m+n\sqrt{n})$ time even when $k=3$ {[}Chechik~et~al.~SODA'17; Forster~et~al.~SODA'20{]}.

Our algorithm also extends to the decremental graph setting; we can deterministically maintain the maximal $k$-edge-connected subgraphs of a graph undergoing edge deletions in $m^{1+o(1)}$ total update time. Our key idea is a reduction to the dynamic algorithm supporting pairwise $k$-edge-connectivity queries {[}Jin and Sun FOCS'20{]}. 
\end{abstract}

\newpage
\pagenumbering{arabic}

\section{Introduction}

We study the problem of efficiently computing the \emph{maximal $k$-edge-connected subgraphs}. Given an undirected unweighted graph $G=(V,E)$ with $n$ vertices and $m$ edges, we say that $G$ is \emph{$k$-edge-connected} if one needs to delete at least $k$ edges to disconnect $G$. The maximal $k$-edge-connected subgraphs of $G$ is a unique vertex partition $\{V_{1},\dots,V_{z}\}$ of $V$ such that, for every $i$, the induced subgraph $G[V_{i}]$ is $k$-edge-connected and there is no strict superset $V'_{i}\supset V_{i}$ where $G[V'_{i}]$ is $k$-edge-connected. 

This fundamental graph problem has been intensively studied. Since the 70's, Tarjan \cite{tarjan1972depth} showed an optimal $O(m)$-time algorithm when $k=2$. For larger $k$, the folklore recursive mincut algorithm takes $\Otil(mn)$ time\footnote{The algorithm computes a global minimum cut $(A,B)$ (using e.g.~Karger's algorithm \cite{karger2000minimum}) and return $\{V\}$ if the cut size of $(A,B)$ is at least $k$. Otherwise, recurse on both $G[A]$ and $G[B]$ and return the union of the answers of the two recursions.} and there have been significant efforts from the database community in devising faster heuristics \cite{yan2005mining,zhou2012finding,akiba2013linear,sun2016efficient,yuan2016efficient} but they all require $\Omega(mn)$ time in the worst case. Eventually in 2017, Chechik et al.~\cite{chechik2017faster} broke the $O(mn)$ bound to $\tilde{O}(m\sqrt{n}k^{O(k)})$ using a novel approach based on \emph{local }cut algorithms. Forster et al.~\cite{forster2020computing} then improved the local cut algorithm and gave a faster Monte Carlo randomized algorithm with $\tilde{O}(mk+n^{3/2}k^{3})$ running time. Very recently, Geogiadis et al.~\cite{georgiadis2022maximal} showed a deterministic algorithm with $\tilde{O}(m+n^{3/2}k^{8})$ time and also how to sparsify a graph to $O(nk\log n)$ edges while preserving maximal $k$-edge-connected subgraphs in $O(m)$ time. Thus, the factor $m$ in the running time of all algorithms can be improved to $O(nk\log n)$ while paying an $O(m)$ additive term. The $O(mn)$ bound has also been improved even in more general settings such as directed graphs and/or vertex connectivity \cite{henzinger2000computing,chechik2017faster,forster2020computing} as well as weighted undirected graphs \cite{nalam2023maximal}. Nonetheless, in the simplest setting of undirected unweighted graphs where $m=O(n)$ and $k=O(1)$, the $\Omega(n\sqrt{n})$ bound remains the state of the art since 2017.

Let us discuss the closely related problem called \emph{$k$-edge-connected components}. The goal of this problem is to compute the unique vertex partition $\{\hat{V}{}_{1},\dots,\hat{V}{}_{z'}\}$ of $V$ such that, each vertex pair $(s,t)$ is in the same part $\hat{V}{}_{i}$ iff the $(s,t)$-minimum cut in $G$ (not in $G[\hat{V}_{i}]$) is at least $k$. The partition of the maximal $k$-edge-connected subgraphs is always a refinement of the $k$-edge-connected components and the refinement can be strict. See \Cref{fig:fig1} for example. Very recently, the Gomory-Hu tree algorithm by Abboud et al.~\cite{abboud2022breaking} implies that $k$-edge-connected components can be computed in $m^{1+o(1)}$ time in undirected unweighted graphs. This algorithm, however, does not solve nor imply anything to our problem. See \Cref{sec:discuss} for a more detailed discussion.

\begin{figure}[htbp]
\centering
\includegraphics[width=1\textwidth]{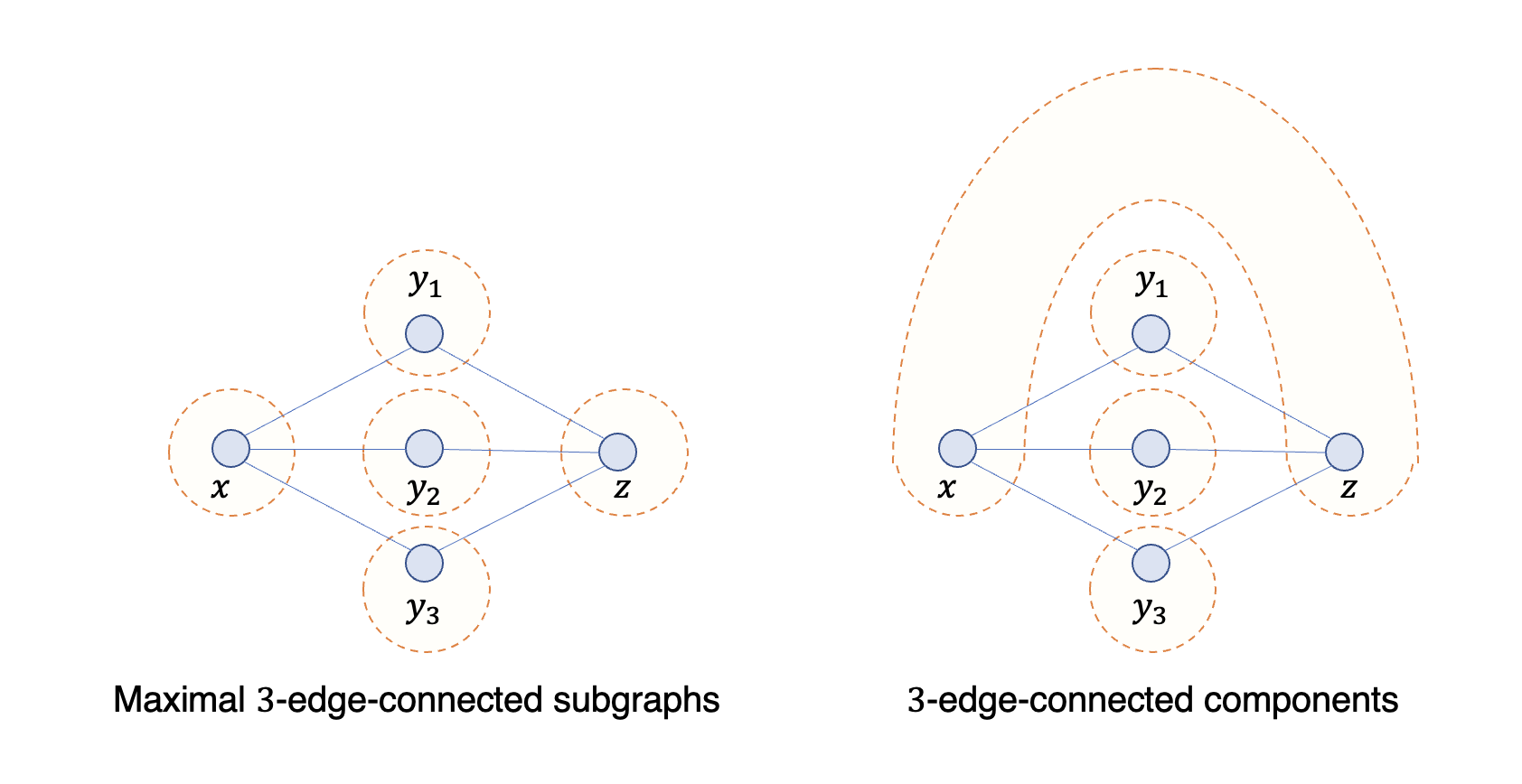}
\caption{A graph $G$ where its maximal $3$-edge-connected subgraphs are different from its $3$-edge-connected components.}
\label{fig:fig1}
\end{figure}

It is an intriguing question whether one can also obtain an almost-linear time algorithm for maximal $k$-edge-connected subgraphs, or there is a separation between these two closely related problems. 

\paragraph{Our results.}
In this paper, we show the first almost-linear time algorithm when $k=\log^{o(1)}n$, answering the above question affirmatively at least for small $k$.

\begin{wrapper}
\begin{theorem}
\label{theorem:deterministic_kecs}
\label{thm:main}There is a deterministic algorithm that, given an undirected unweighted graph $G$ with $n$ vertices and $m$ edges, computes the maximal $k$-edge-connected subgraphs of $G$ in $O(m+n^{1+o(1)})$ time for any $k=\log^{o(1)}n$.
\end{theorem}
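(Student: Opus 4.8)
The plan is to realize the classical recursive mincut algorithm as a purely decremental (edge-deletion) process and to drive it with the dynamic pairwise $k$-edge-connectivity data structure of Jin and Sun, which supports edge deletions and queries ``$\lambda(s,t)\ge k$?'' in $n^{o(1)}$ amortized time whenever $k=\log^{o(1)}n$. First I would dispose of the additive $O(m)$ term: using the sparsification of Georgiadis et al.~\cite{georgiadis2022maximal} I reduce in $O(m)$ time to a graph with only $O(nk\log n)=n^{1+o(1)}$ edges that has the same maximal $k$-edge-connected subgraphs, so from now on $m=n^{1+o(1)}$.

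The structural backbone is the observation that a cut of size $<k$ never crosses a maximal $k$-edge-connected subgraph: if $G[V_i]$ is $k$-edge-connected and $(A,B)$ is a cut of $G$ of size $<k$, then $(V_i\cap A,\, V_i\cap B)$ would be a cut of $G[V_i]$ of size $<k$, a contradiction unless $V_i$ lies entirely in $A$ or entirely in $B$. Hence the recursion ``while the current piece has a cut of size $<k$, delete its $<k$ crossing edges and recurse on the two sides'' is correct and, by uniqueness of the partition, order-independent. Crucially, this recursion only ever \emph{deletes} edges, namely the crossing edges of the cuts it finds, so I maintain a single graph under deletions whose connected components are exactly the live subproblems. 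Since a component $C$ has no edges leaving it, a query $\lambda(s,t)\ge k$ in the current graph equals the induced value $\lambda_{G[C]}(s,t)$ for $s,t\in C$, and by the identity $\lambda(G[C])=\min_{t\in C}\lambda(r,t)$ for any fixed $r\in C$, the component $C$ is $k$-edge-connected iff all pairwise queries $(r,t)$ return ``$\ge k$.'' When a query returns ``$<k$'' I extract an explicit cut of size $<k$ separating $r$ from $t$ (either from a witness returned by the data structure, or by a bounded max-flow computation to value $<k$ on the sparse certificate), delete its crossing edges, feed the deletions to the Jin--Sun structure, and obtain two smaller components. When every component passes all its queries, the components are exactly the maximal $k$-edge-connected subgraphs.

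The accounting has two parts. Each split deletes $<k$ edges and increases the number of components by one, so there are at most $n-1$ splits and at most $O(nk)=n^{1+o(1)}$ total edge deletions, each processed in $n^{o(1)}$ time. For the queries I would use a smaller-half charging scheme: whenever a component $C$ splits into $A$ and $B$ with $|A|\le|B|$, I re-verify the smaller side $A$ from scratch (at most $|A|$ pairwise queries) and charge these queries to the vertices of $A$. Since a vertex lies in the smaller half at most $O(\log n)$ times along the recursion, the total number of ``successful'' verification queries is $O(n\log n)$, while the number of ``failing'' queries is at most one per split, i.e.\ $O(n)$. At $n^{o(1)}$ per query this is $n^{1+o(1)}$, giving the claimed $O(m+n^{1+o(1)})$ bound.

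The main obstacle is precisely the verification of the \emph{larger} side $B$ after a split. Deleting the $<k$ crossing edges can lower $\lambda_{G[C]}(r,t)$ below $k$ for vertices $t\in B$ as well (a path from $r$ to $t$ routed through the other side uses at least two crossing edges, so connectivity can drop by up to $\Theta(k)$), so $B$ is not automatically certified by the queries already spent on $C$; naively re-verifying both sides makes $\sum_{\text{pieces}}|C|$ blow up to $\Theta(n^2)$ on unbalanced recursions (e.g.\ peeling off one low-degree vertex at a time). The fix I would pursue is to have the dynamic structure \emph{monitor} the surviving pairs of the larger side: keep its representative and its already-verified pairs as monitored pairs, and rely on the decremental structure to flag a monitored pair only when its value truly drops below $k$. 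Each monitored pair is flagged at most once before it triggers a split (after which it is re-monitored against a fresh representative), so the re-verification of the large side is paid for by the number of splits rather than by its size. Making this monitoring efficient, that is, reporting the dropped pairs without rescanning all of them after every deletion, is the crux; this is where the specific guarantees of the Jin--Sun data structure (and, if needed, Nagamochi--Ibaraki sparse $k$-connectivity certificates to localize which pairs a given deletion can affect) would be invoked.
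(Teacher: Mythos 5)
Your proposal correctly identifies the crux --- re-certifying the larger side after a split without paying for its size --- but it does not solve it, and the two primitives you lean on are not available. First, \Cref{theorem:dynamic_kec} is a query-only interface: it answers ``are $s,t$ $k$-edge-connected?'' but returns no witness cut, so ``extract an explicit cut of size $<k$'' has no support from the data structure; your fallback of a bounded max-flow to value $k$ costs $\Theta(k\cdot|E(C)|)$ per split, and under an unbalanced recursion (peeling one low-degree vertex at a time, so $\Theta(n)$ splits on components of size $\Theta(n^{1+o(1)})$) this sums to $\Theta(kn^{2+o(1)})$ --- exactly the blow-up you flag for the query accounting, now reappearing in the cut-finding step. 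Second, the ``monitoring'' fix --- having the decremental structure flag a registered pair the moment its connectivity drops below $k$ --- is not an operation Jin--Sun supports, and you explicitly defer its realization (``this is the crux''). So the proof as written has a genuine gap at its central step; the smaller-half charging for verification queries is fine as far as it goes, but both the splitting and the large-side maintenance remain unjustified.

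The paper resolves both issues at once with two ideas absent from your proposal. (1) It never finds an explicit $<k$ cut: when $u,v\in L$ fail the pairwise query, it grows the connected components of $G[kECC(u)]$ and $G[kECC(v)]$ \emph{simultaneously} by BFS filtered through yes/no queries, stopping at the one of smaller \emph{volume} $U$ (note: volume, not cardinality, is what makes the local cost $O(t_q\cdot\vol(U))$ chargeable to the smaller side and bounds the recursion depth by $O(\log m)$). The removed edges $E(U,V\setminus U)$ have endpoints in different $k$-edge-connected components, so they lie in no maximal $k$-edge-connected subgraph --- correctness needs no cut witness at all. (2) Instead of per-pair re-verification of the larger side, the paper maintains an active list $L$ with the invariant (\Cref{lem:invariant}) that every $k$-cut of the current graph intersects $L$: removing $v$ after a successful query is safe because any $k$-cut containing $v$ contains all of $kECC(v)\ni u$, and after deleting $U$ only the boundary neighbors of $U$ re-enter $L$, justified by \Cref{lem:key} (a $k$-cut of $G[C]$ is a $k$-cut of $G$ or touches the boundary). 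When $|L|=1$ the component is certified $k$-edge-connected (\Cref{cor:stop k connected}) without any $\Theta(|C|)$ scan from a representative. This is precisely the ``monitoring'' you wanted, implemented combinatorially on top of the weak query interface rather than assumed of the data structure.
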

\end{wrapper}

Our techniques naturally extend to the decremental graph setting.

\begin{wrapper}
\begin{theorem}
\label{thm:main dyn}There is a deterministic algorithm that, given an undirected unweighted graph $G$ with $n$ vertices and $m$ edges undergoing a sequence of edge deletions, maintains the maximal $k$-edge-connected subgraphs of $G$ in $m^{1+o(1)}$ total update time for any $k=\log^{o(1)}n$.
\end{theorem}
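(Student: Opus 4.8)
The plan is to observe that the reduction underlying \Cref{thm:main} is already \emph{decremental} in spirit, and to extend it so that it absorbs adversarial deletions. Throughout, I maintain an auxiliary graph $G'$, namely the current (real) graph with a set of \emph{algorithmically deleted} edges removed, inside the dynamic $k$-edge-connectivity structure of Jin and Sun, which answers for any pair in the present $G'$ whether their minimum cut is at least $k$. In parallel I maintain the connected components of $G'$ with a decremental connectivity structure. The guiding invariant is that, at every \emph{stable} moment, each connected component of $G'$ is itself $k$-edge-connected; equivalently, the connected-component partition of $G'$ equals its partition into $k$-edge-connected components. I will argue that at such moments this partition is exactly the maximal $k$-edge-connected subgraphs of the current graph, so that running the loop from scratch reproduces \Cref{thm:main} and interleaving adversarial deletions gives the decremental algorithm.

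First I would establish correctness at stable moments. Every algorithmic deletion removes the edges of a cut of size less than $k$ lying inside one connected component of $G'$. Since any vertex set $S$ meeting both sides of such a cut inherits a cut of size less than $k$ in the induced graph $G'[S]$, no $k$-edge-connected subgraph can straddle the cut; hence deleting these edges leaves the family of maximal $k$-edge-connected subgraphs of the current graph unchanged, and each such subgraph lies wholly inside one connected component of $G'$. Conversely, because every algorithmically deleted edge runs \emph{between} two components, no such edge lies inside a component, so $G'[C]$ coincides with the induced subgraph of the real graph on $C$ for each connected component $C$. Combined with the stability invariant this shows each $C$ induces a $k$-edge-connected subgraph, and by the straddling argument a maximal one; thus the connected components of $G'$ are precisely the maximal $k$-edge-connected subgraphs.

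Next I would describe the update loop and its accounting. When the adversary deletes an edge, I remove it from $G'$ if present and then \emph{stabilize}: while some connected component fails to be $k$-edge-connected, I extract a cut of size less than $k$ inside it and delete those edges from $G'$, splitting the component; any cascade onto the two new sides is handled by the same loop. Detection of the first destabilization is cheap: if a component that was $k$-edge-connected loses an intra-component edge $(u,v)$, then every cut of size less than $k$ created by the deletion must separate $u$ and $v$ (a cut not separating them would already have had the same size, at least $k$, before the deletion), so a single query at the endpoints suffices. The crucial point is monotonicity: the connected components of $G'$ only refine, so there are at most $n-1$ splits over the whole sequence, and each original edge is removed from $G'$ at most once, whether adversarially or algorithmically. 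Hence the Jin--Sun structure receives at most $m$ updates in total, costing $m^{1+o(1)}$ when $k=\log^{o(1)}n$, and the decremental connectivity structure costs $\Otil(m)$.

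The step I expect to be the main obstacle is keeping the stabilization work—both extracting the separating cuts and certifying that a split is complete—within the $m^{1+o(1)}$ budget across cascades. A naive minimum-cut computation per split would cost $\Omega(km)$ and recur on the large side repeatedly, which is far too much; this is precisely why a decremental structure that \emph{maintains} $k$-edge-connectivity, rather than recomputation from scratch, is essential, so that the large side of a split is never recertified but merely reexamined through a few queries as its incident cut edges leave $G'$. I would resolve the cut extraction exactly as in \Cref{thm:main}: each split is witnessed by a cut whose \emph{smaller} side has small volume, obtained either from the certificate the connectivity structure exposes on a change or by a deterministic local cut procedure in the style of Chechik et al.\ and Forster et al.\ running in time proportional to that smaller side. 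Charging this work, together with the $n^{o(1)}$-time queries used for detection, to the smaller side of each split, and using the standard smaller-half argument so that each vertex and edge is charged only $O(\log n)$ times, the total stabilization cost telescopes to $m^{1+o(1)}$. Summing the three contributions yields the claimed $m^{1+o(1)}$ total update time, and determinism is inherited since every ingredient—the Jin--Sun structure, decremental connectivity, and the local cut procedure—is deterministic.
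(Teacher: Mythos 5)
Your overall architecture coincides with the paper's: detecting a destabilization by a single oracle query at the deleted edge's endpoints (the paper runs its static routine with $L=\{x,y\}$ and justifies this by exactly your observation that any newly created $(<k)$-cut must separate $x$ from $y$), splitting only along cuts of size less than $k$, monotone refinement of the partition, and charging work to the smaller-volume side so that each edge is charged $O(\log n)$ times, giving the $O((t_p+t_q)\cdot m\log n)$ bound of \Cref{lemma:dynamic_kecs}. The genuine gap is the step you yourself flag as the main obstacle: how to \emph{extract} the witnessing cut deterministically in time proportional to the volume of its smaller side. The data structure of \Cref{theorem:dynamic_kec} is a query-only oracle --- it answers pairwise $k$-edge-connectivity and exposes no cut witness upon an update --- so ``the certificate the connectivity structure exposes on a change'' does not exist. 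Your fallback to the local cut procedures of Chechik et al.\ and Forster et al.\ is not a safe landing either: the variants running in time near-linear in the volume budget are Monte Carlo randomized (contradicting your closing claim that every ingredient is deterministic), while the deterministic variants carry $k^{O(k)}$ or superlinear-in-volume overheads that you would have to verify fit inside the $m^{1+o(1)}$ budget, and the proposal does not do so.

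The paper needs none of that machinery, because it extracts the cut \emph{with the oracle itself}: once the query at $(x,y)$ reports that the pair is no longer $k$-edge-connected, it runs two BFSs simultaneously, from $x$ and from $y$, admitting a newly reached vertex $w$ into the ball of $x$ (resp.\ $y$) only if a query reports $w$ to be $k$-edge-connected to $x$ (resp.\ $y$); whichever connected component $U$ of $G[kECC(\cdot)]$ completes first has the smaller volume and is found in $O(t_q\cdot\vol(U))$ time, and cutting along $E(U,V(G)\setminus U)$ is sound because every such edge joins a pair that is not $k$-edge-connected in any induced subgraph. A second, related omission is your cascade handling: after a split removes several edges at once, your single-pair detection argument no longer applies, and ``reexamined through a few queries as its incident cut edges leave $G'$'' is not yet a procedure. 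The paper's mechanism is the list $L$ of endpoints of removed edges together with the invariant (\Cref{lem:invariant}, proved via \Cref{lem:key}) that every $(<k)$-cut of the current graph intersects $L$; this simultaneously tells you which pairs to query, where to seed the next extraction, and --- when $|L|=1$ --- certifies that the remaining component is $k$-edge-connected, so the stabilization loop may terminate. With these two ingredients your plan becomes precisely \Cref{algo:deterministic_kecs} invoked as $\Main(H,\{x,y\})$ after each deletion inside a maximal $k$-edge-connected subgraph, which is the paper's proof of \Cref{thm:main dyn}.
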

\end{wrapper}

Dynamic algorithms for maximal $k$-edge-connected subgraphs were recently studied in \cite{georgiadis2022maximal}. For comparison, their algorithm can handle both edge insertions and deletions but require $O(n\sqrt{n}\log n)$ worst-case update time, which is significantly slower than our  $m^{o(1)}$ amortized update time. When $k=3$, they also gave an algorithm that handles edge insertions only using $\Otil(n^{2})$ total update time. 

\paragraph{Previous Approaches and Our Techniques.}
Our approach diverges significantly from the local-cut-based approach in \cite{chechik2017faster,forster2020computing}. In these previous approaches, they call the local cut subroutine $\Omega(n)$ times and each call takes $\Omega(\sqrt{n})$ time. Hence, their running time is at least $\Omega(n\sqrt{n})$ and this seems inherent without significant modification. 
Recently, \cite{georgiadis2022maximal} took a different approach. Their $\tilde{O}(m+n^{3/2}k^{8})$-time algorithm  efficiently implements the folklore recursive mincut algorithm by feeding $O(nk)$ updates to the dynamic minimum cut algorithm by Thorup \cite{thorup2007fully}. However, since Thorup's algorithm has $\Omega(\sqrt{n})$ update time, the final running time of \cite{georgiadis2022maximal} is at least $\Omega(n\sqrt{n})$ as well.

Our algorithm is similar to \cite{georgiadis2022maximal} in spirit but is much more efficient. We instead apply the dynamic $k$-edge connectivity algorithms by Jin and Sun \cite{jin2022fully} that takes only $n^{o(1)}$ update time when $k=\log^{o(1)}n$. Our reduction is more complicated than the reduction in \cite{georgiadis2022maximal} to dynamic minimum cut because the data structure by \cite{jin2022fully} only supports pairwise $k$-edge connectivity queries, not a global minimum cut. Nonetheless, we show that $\Otil(nk)$ updates and queries to this ``weaker'' data structure also suffice. 

Our approach is quite generic. Our algorithm is carefully designed without the need to check if the graph for which the recursive call is made is $k$-edge-connected. This allows us to extend our algorithm to the dynamic case. 

\paragraph{Organization.}
We give preliminaries in \Cref{sec:prelim}. Then, we prove \Cref{thm:main} and \Cref{thm:main dyn} in \Cref{sec:static} and \Cref{sec:dynamic}, respectively.

\label{sec:typesetting-summary}
\section{Preliminaries}
\label{sec:prelim}

    Let $G = (V, E)$ be an \emph{unweighted undirected} graph. Let $n = \abs{V}$ and $m = \abs{E}$, and assume $m = \poly(n)$ and $k=\log^{o(1)}n$. For any $S, T \subseteq V$, let $E(S, T) = \{(u, v) \in E \mid u \in S, v \in T\}$. For every vertex $u$, the degree of $u$ is $\deg(u) = \abs{\{(u, v) \mid (u, v) \in E\}}$. For every subset of vertices $S \subseteq V$, the volume of $S$ is  $\vol(S) = \sum_{u \in S} \deg(u)$. Denote $G[S]$ as the induced graph of $G$ on a subset of vertices $S\subseteq V$. 

    Two vertices $s$ and $t$ are \emph{$k$-edge-connected} in $G$ if one needs to delete at least $k$ edges to disconnect $s$ and $t$ in $G$.     
    A vertex set $S$ is \emph{$k$-edge-connected} if every pair of vertices in $S$ is $k$-edge-connected. 
    We use the convention that $S$ is $k$-edge-connected when $|S|=1$.
    We say that a graph $G=(V,E)$ is $k$-edge-connected if $V$ is $k$-edge-connected. 
    A \emph{$k$-edge-connected component} is an inclusion-maximal vertex set $S$ such that $S$ is $k$-edge-connected. 
    A whole vertex set can always be partitioned into $k$-edge-connected components. 
    We use $kECC(u)$ to denote the unique $k$-edge-connected component containing $u$. 
    Note that a $k$-edge-connected component may not induce a connected graph when $k > 2$.  
    A vertex set $S$ is a \emph{$k$-cut} if $|E(S,V\setminus S)| < k$. Note, however, we also count the whole vertex set $V$ as a trivial $k$-cut.

We will crucially exploit the following dynamic algorithm in our paper.

\begin{theorem}[Dynamic pairwise $k$-edge connectivity \cite{jin2022fully}] \label{theorem:dynamic_kec}
    There is a deterministic algorithm that maintains a graph $G$ with $n$ vertices undergoing edge insertions and deletions using $n^{o(1)}$ update time and, given any vertex pair $(s,t)$, reports whether $s$ and $t$ are $k$-edge-connected in the current graph $G$ in $n^{o(1)}$ time where $k = \log^{o(1)}n$.
\end{theorem}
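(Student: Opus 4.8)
The statement is a black-box citation to Jin and Sun \cite{jin2022fully}, so I only sketch the strategy one would follow to establish it rather than a self-contained argument. The plan is to reduce dynamic pairwise $k$-edge-connectivity to maintaining a \emph{vertex sparsifier for small cuts}, organized as an expander hierarchy. The guiding structural fact is that in a $\phi$-expander every cut of value at most $k$ is ``local'': if $|E(S, V\setminus S)| \le k$ then the smaller side satisfies $\vol(S) \le k/\phi$. Thus every cut relevant to $k$-edge-connectivity peels off only a low-volume piece attached near the boundary, and the dense ``core'' of an expander can be collapsed without destroying any cut of value below $k$.

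First I would set up a static sparsifier. Given $G$, compute a boundary-linked $\phi$-expander decomposition $V = C_1 \cup \dots \cup C_t$ in which each induced cluster is a $\phi$-expander and the total number of inter-cluster edges is $\Otil(\phi m)$. For each cluster I would replace $C_i$ by a gadget $H_i$ that preserves every cut of value less than $k$ separating the boundary vertices of $C_i$; the locality fact guarantees such a gadget whose size is governed by $|\partial C_i|$ and $\poly(k/\phi)$ rather than by $|C_i|$. Gluing the $H_i$ together with the inter-cluster edges yields a graph $G_1$ that preserves all sub-$k$ cuts of $G$ while being sparser, and iterating $G \to G_1 \to G_2 \to \dots$ builds a hierarchy designed so that the size shrinks geometrically, leaving $n^{o(1)}$ levels whose top graph $G_L$ has $n^{o(1)}$ vertices. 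Because every level preserves all cuts of value below $k$, two vertices $s,t$ are $k$-edge-connected in $G$ iff their images are $k$-edge-connected in $G_L$, which is tested directly in $n^{o(1)}$ time.

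Next I would make the hierarchy dynamic. The key tool is dynamic expander decomposition with \emph{expander pruning}: after an edge deletion a $\phi$-expander remains an expander once a low-volume pruned set is split off, and that set can be found in time proportional to its volume. An update to $G_i$ therefore triggers only a bounded amount of re-clustering and re-sparsification at level $i$, which appears as a bounded batch of updates to $G_{i+1}$. If each update produces at most $n^{o(1)}$ recourse per level and there are $n^{o(1)}$ levels, the cost telescopes to $n^{o(1)}$ amortized update time. Edge insertions are handled by the standard reduction that rebuilds any cluster that has absorbed too many inserted edges, charging the rebuild to those edges.

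The hard part will be twofold. First, the structural sparsifier: turning the locality fact into an explicit small gadget $H_i$ that provably preserves the full $k$-edge-connectivity relation among its boundary terminals — and, crucially, balancing the parameters so that the $\poly(k/\phi)$ and $1/\phi$ blowups do not overwhelm the $\Otil(\phi)$ shrinkage at each level — is the technical core, and it is where the dependence on $k$, and hence the restriction $k = \log^{o(1)} n$, enters. Second, controlling recourse across the hierarchy \emph{deterministically}: one must ensure pruning at one level does not cascade into super-polynomial change higher up, which requires both the boundary-linked property of the decomposition and a deterministic expander-decomposition subroutine. These two ingredients and their delicate parameter tuning are essentially the whole content of \cite{jin2022fully}; the reduction and hierarchy bookkeeping around them are comparatively routine.
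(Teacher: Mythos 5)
The paper offers no proof of this statement at all---it is imported verbatim from Jin and Sun \cite{jin2022fully} and used purely as a black box, so there is nothing internal to compare your sketch against. For what it is worth, your outline (locality of sub-$k$ cuts in expanders, boundary-preserving cluster gadgets glued into an $n^{o(1)}$-level hierarchy, maintained deterministically via expander pruning with bounded per-level recourse) is a faithful summary of the strategy actually used in \cite{jin2022fully}, and you correctly locate the technical weight in the sparsifier construction and the deterministic recourse control rather than in the surrounding bookkeeping.
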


For the maximal $k$-edge-connected subgraph problem, we can assume that the graph is sparse using the \emph{forest decomposition}.

\begin{definition}[Forest decomposition \cite{nagamochi2008encyclopedia}]
    A $t$-forest decomposition of a graph $G$ is a collection of forests $F_1, \ldots, F_t$, such that $F_i$ is a spanning forest of $G \setminus \bigcup_{j=1}^{i-1}F_j$, for every $1 \leq i \leq t$.
\end{definition}

\begin{theorem}[Lemma 8.3 of \cite{georgiadis2022maximal}] \label{theorem:certificate}
     Any $O(k \log n)$-forest decomposition of a graph has the same maximal $k$-edge-connected subgraphs as the original graph.
     Moreover, there is an algorithm for constructing such a $O(k \log n)$-forest decomposition in $O(m)$ time.
\end{theorem}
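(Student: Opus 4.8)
The plan is to split the statement into a combinatorial part (the union $H = F_1 \cup \cdots \cup F_t$ with $t = O(k\log n)$ has the same maximal $k$-edge-connected subgraphs as $G$) and an algorithmic part (such a decomposition is computable in $O(m)$ time). For the combinatorial part I would first dispose of the easy direction. Since $H \subseteq G$ on the same vertex set, whenever $H[W]$ is $k$-edge-connected so is $G[W]$; hence every part of the partition of $H$ induces a $k$-edge-connected subgraph of $G$ and is therefore contained in a single part of the partition of $G$. Thus the partition of $H$ refines that of $G$, and it remains to prove the reverse: for every part $U$ of the partition of $G$, the restriction $H[U]$ is again $k$-edge-connected. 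Maximality of $U$ is essential here, since it is \emph{false} that $G[U]$ being $k$-edge-connected forces $H[U]$ to be $k$-edge-connected for an arbitrary $U$ (two vertices joined by a double edge but with very high external connectivity is a counterexample).

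The main tool is the Nagamochi--Ibaraki replacement property: if an edge $(u,v) \in E(G)$ does not belong to $H = F_1 \cup \cdots \cup F_t$, then $u$ and $v$ are $t$-edge-connected inside $H$. Indeed, for $(u,v)$ to be excluded from each spanning forest $F_1, \ldots, F_t$, its endpoints must already be connected within every $F_i$, and these $t$ connecting paths (one per forest) are edge-disjoint. Now suppose toward a contradiction that $H[U]$ is not $k$-edge-connected for some part $U$, and take a minimum cut $(S, U \setminus S)$ of $H[U]$, of size $c < k$; since each piece of the partition of $H[U]$ is $k$-edge-connected, this cut respects them. Because $G[U]$ is $k$-edge-connected, the same cut carries at least $k$ edges of $G[U]$, so at least one crossing edge $(x,y)$ with $x \in S$, $y \in U \setminus S$ lies in $E(G[U]) \setminus E(H)$. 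Its $t$ edge-disjoint replacement paths in $H$ each cross the global cut $(S, V \setminus S)$; at most $c < k$ of them can cross using an $H$-edge internal to $U$, so more than $t - k$ of them must leave $U$, yielding more than $t-k$ distinct edges of $H$ from $S$ into $V \setminus U$ (and, symmetrically, more than $t-k$ edges from $U \setminus S$ into $V \setminus U$).

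The step I expect to be the main obstacle is turning this abundance of edges leaving $U$ into a contradiction, and this is exactly where the $\log n$ factor originates. The difficulty is structural: a forest decomposition of $G$ does \emph{not} restrict to a forest decomposition of the induced subgraph $G[U]$, because the early forests may route the connectivity of $U$ through $V \setminus U$, so that edges internal to $U$ first surface only in very late forests. My plan to control this is an iterative argument over $O(\log n)$ scales: repeatedly contract the currently certified $k$-edge-connected pieces and reapply the replacement property to the quotient graph, charging $O(k)$ forests to each round. Invoking maximality of $U$ (no strictly larger set induces a $k$-edge-connected subgraph), one argues that the excursions through $V \setminus U$ detected above force a constant-factor reduction in the number of surviving pieces per round, so that $O(\log n)$ rounds, and hence $t = O(k\log n)$ forests, suffice to certify connectivity of $H[U]$ at the level of $G[U]$. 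Making this per-round potential drop precise is the technical heart of the argument.

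Finally, for the construction, the entire $t$-forest decomposition can be produced in a single linear-time pass using Nagamochi--Ibaraki's scan-first search, which assigns to each edge the index of the forest containing it while scanning the graph once; retaining the first $t = O(k \log n)$ forests costs $O(m)$ time, matching the claimed bound.
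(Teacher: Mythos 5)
The paper does not actually prove this statement: it is imported verbatim as Lemma 8.3 of \cite{georgiadis2022maximal}, so your attempt has to stand on its own rather than be matched against an in-paper argument. The first half of your proposal is sound: the refinement direction via $H\subseteq G$, the Nagamochi--Ibaraki replacement property (an edge excluded from all of $F_1,\dots,F_t$ has its endpoints $t$-edge-connected in $F_1\cup\dots\cup F_t$, one path per forest), and the counting step showing that a cut $(S,U\setminus S)$ of $H[U]$ of size $c<k$ forces more than $t-k$ edge-disjoint excursions from $S$ into $V\setminus U$ are all correct, as is your observation that maximality of $U$ must be used because the claim fails for arbitrary $k$-edge-connected $U$. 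The $O(m)$-time construction via the Nagamochi--Ibaraki scan is also fine.

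The genuine gap is exactly the step you flag yourself: the ``$O(\log n)$ rounds with a constant-factor drop in surviving pieces'' is announced but never argued, and as sketched it does not go through. Two concrete problems. First, contracting the currently certified pieces does \emph{not} yield a forest decomposition of the quotient graph: the images of the $F_i$ can acquire parallel edges and cycles, and each $F_i$ is a \emph{maximal} spanning forest only of the residual graph of $G$, not of any quotient, so you cannot ``reapply the replacement property'' with a fresh budget of $O(k)$ forests per round --- the replacement property is a single global property of the original decomposition (pairwise connectivity survives contraction, but a per-round supply of new forests does not). Second, no mechanism is given by which maximality of $U$ converts the abundance of $H$-edges from $S$ (and from $U\setminus S$) into $V\setminus U$ into a contradiction: a maximal $k$-edge-connected subgraph can have arbitrarily many outgoing edges, so the excursions alone contradict nothing. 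What is missing is the structural idea that makes the $\log n$ appear --- for instance, analyzing a recursive decomposition of $H$ (valid since the maximal partition is unique, so one may analyze a decomposition of one's choosing), observing that the boundary $E_H(C,V\setminus C)$ of the current piece $C\supseteq U$ is bounded by $(k-1)$ times the number of ancestor splits, and arranging the splits so that this depth is logarithmic; your excursion count $|E_H(S,V\setminus C)|\geq t-k+1$ then collides with that budget. Without making some such per-round or per-level accounting precise, and noting that \Cref{theorem:certificate} quantifies over \emph{any} $O(k\log n)$-forest decomposition (so nothing specific to scan-first search may be used in the combinatorial part), the proposal is a plan rather than a proof precisely at its technical heart.
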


\section{The Static Algorithm}
\label{sec:static}

In this section, we prove our main result, \Cref{thm:main}.
The key idea is the following reduction:

\begin{lemma} \label{lemma:deterministic_kecs}
Suppose there is a deterministic decremental algorithm supporting pairwise $k$-edge-connectivity that has $t_p \cdot m$ total preprocessing and update time on an initial graph with $n$ vertices and $m$ edges and query time $t_q$. 

Then, there is a deterministic algorithm for computing the maximal $k$-edge-connected subgraphs in $O(m + (t_p + t_q) \cdot kn \log^2 n)$ time.
    
\end{lemma}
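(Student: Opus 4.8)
The plan is to realize the folklore recursive-mincut algorithm on top of the pairwise-connectivity oracle, replacing every global-min-cut computation by oracle queries. First I would invoke \Cref{theorem:certificate} to replace $G$ by an $O(k\log n)$-forest decomposition with the same maximal $k$-edge-connected subgraphs; this costs $O(m)$ time and lets me assume $m=O(kn\log n)$ from now on, which absorbs the additive $O(m)$ term. I then initialize a \emph{single} decremental structure on this sparsified graph and never re-initialize it: the invariant is that the current graph stored in the structure is always exactly the disjoint union of the induced subgraphs on the current parts, so a query $(s,t)$ with $s,t$ in the same part reports $k$-edge-connectivity inside that part's induced subgraph.

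The recursion \Main$(S)$ proceeds as follows. Pick a representative $s\in S$ and query it against the other vertices of $S$. The structural fact I would rely on is that pairwise $k$-edge-connectivity is an \emph{equivalence relation}: if $s$ is $k$-edge-connected to all of $S$ then, by transitivity through $s$, the whole of $G[S]$ is $k$-edge-connected and I output $S$; otherwise the same queries already classify $S$ into the $k$-edge-connected components $C_1,\dots,C_p$ of $G[S]$. I then delete from the structure every edge of $G[S]$ whose endpoints lie in two different components, and recurse on each $C_i$ (they are now disconnected in the structure, so queries inside $C_i$ automatically respect $G[C_i]$). This is the point where the oracle substitutes for a min-cut routine: I never need to know a cut explicitly, only the component labels.

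For correctness I would argue three things. (1) The output parts partition $V$ and each induces a $k$-edge-connected graph, by the stopping condition. (2) The recursion never separates a $k$-edge-connected subgraph: if $G[V']$ is $k$-edge-connected and $V'\subseteq S$, then every pair in $V'$ is already $k$-edge-connected in $G[V']\subseteq G[S]$, hence lies in a single component $C_i$; thus $V'$ stays intact at every level and ends up inside one output part. Combined with (1) and the disjointness of the output parts, this forces each output part to be \emph{maximal} (a strict $k$-edge-connected superset would have to sit inside another output part, contradicting disjointness), so the output is exactly the maximal $k$-edge-connected subgraphs. (3) Termination holds since a non-$k$-edge-connected $G[S]$ has at least two components, each a strict subset of $S$. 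Notably, this argument works even though the boundary $\vol$ of a component $C_i$ may exceed $k$, because placement of $V'$ in one component uses transitivity, not a small-cut argument.

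The delicate part, and the step I expect to be the main obstacle, is the query/time accounting. The deletion side is clean: an edge is deleted only when its two endpoints first fall into different components, i.e.\ when it crosses the final partition, so there are at most $m$ deletions in total and the structure spends $t_p\cdot m=O(t_p\,kn\log n)$ time. The danger is the number of queries: scanning the whole current set at every node costs $\Omega(|S|)$ queries per node, and since components can be arbitrarily unbalanced the recursion depth—hence $\sum_S|S|$—can blow up to $\Theta(n^2)$. To bring this down to $O(kn\log^2 n)$ I would (i) always continue the \emph{largest} component inside the same decremental structure so it is never rescanned from scratch, charging fresh scans only to the smaller sides, whose volumes sum to $O(m\log n)$ by the standard smaller-half argument; and (ii) exploit that a component $C_i$ was already $k$-edge-connected one level above, so only the boundary deletions incident to $C_i$ can create new splits, letting me charge its re-examination to deleted edges rather than to $|C_i|$. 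Showing that such a localized re-examination can actually be driven by the weak pairwise oracle (which does not by itself localize changes) is the real technical content I would need to fill in. Granting it, the totals are $O(m)+t_p\cdot m+t_q\cdot O(kn\log^2 n)=O(m+(t_p+t_q)\,kn\log^2 n)$, where one $\log n$ comes from the $O(k\log n)$-forest sparsification and the other from the smaller-half recursion.
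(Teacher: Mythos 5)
Your correctness skeleton is essentially sound and matches the paper's in spirit: pairwise $k$-edge-connectivity is indeed an equivalence relation, edges crossing distinct $k$-edge-connected components of $G[S]$ can never lie inside a maximal $k$-edge-connected subgraph, and the stopping condition is valid since a graph is $k$-edge-connected iff all vertex pairs are. (One local slip: querying a single representative $s$ against all of $S$ only reveals $kECC(s)\cap S$, not the full partition into components $C_1,\dots,C_p$; you must iterate representatives, which multiplies the scan cost and feeds exactly the $\Theta(n^2)$ blow-up you then worry about.) The decisive problem is the one you name yourself at the end: your scheme has no mechanism by which the \emph{pairwise} oracle localizes re-examination after boundary deletions, and your steps (i)--(ii) do not supply one. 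Skipping the largest component is not an option --- a $k$-edge-connected component $C_1$ of $G[S]$ need not induce a $k$-edge-connected graph $G[C_1]$ once the crossing edges are deleted (this is precisely the gap between $k$-edge-connected components and maximal $k$-edge-connected subgraphs, cf.\ the paper's Figure 1) --- and ``charging re-examination to deleted edges'' presupposes you know \emph{which pairs to query}, which the oracle alone does not tell you. So the proposal is a correct plan with the central technical ingredient missing, and the claimed $O(m+(t_p+t_q)kn\log^2 n)$ bound does not follow from what you wrote.

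The paper closes exactly this gap with two devices in \Cref{algo:deterministic_kecs}. First, an explicit active list $L$: when a part $U$ is split off, only the endpoints of the deleted crossing edges $E(U,V(G)\setminus U)$ are inserted into $L$, and \Cref{lem:invariant} --- proved via \Cref{lem:key}, which says a $k$-cut of an induced subgraph $G[C]$ is either a $k$-cut of $G$ or touches an endpoint of $E(C,V(G)\setminus C)$ --- guarantees that \emph{every} $k$-cut of the current graph intersects $L$. Hence it suffices to test pairs drawn from $L$, $|L|=1$ certifies $k$-edge-connectivity (\Cref{cor:stop k connected}), and no component is ever rescanned wholesale: each vertex enters $L$ once initially or once per deleted incident edge, so the number of queries per recursion level is $O(m')$. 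This is precisely the localization your step (ii) hoped for. Second, instead of classifying all of $S$ by global scanning, the algorithm grows the connected components of $G[kECC(u)]$ and $G[kECC(v)]$ \emph{simultaneously} by oracle-guided BFS and stops at the one with smaller volume; this pays only $O(t_q\cdot\vol(U))$ for the split and guarantees $\vol(U)\le m'/2$, yielding recursion depth $O(\log m_0)$ and total time $O((t_p+t_q)\,m_0\log m_0)$, which after the $O(k\log n)$-forest sparsification of \Cref{theorem:certificate} (which you correctly invoke) gives the stated bound. A cosmetic difference: the paper re-initializes a fresh decremental structure per recursive call (affordable thanks to the volume halving), whereas your single-structure bookkeeping would also work; but without the list invariant and the smaller-volume local search, your accounting cannot be completed.
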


By plugging in \cref{theorem:dynamic_kec}, we get \cref{theorem:deterministic_kecs}. The rest of this section is for proving \Cref{lemma:deterministic_kecs}.
Throughout this section, we let $t_q$ denote the query time of the decremental pairwise $k$-edge connectivity data structure that \Cref{lemma:deterministic_kecs} assumes.

Recall again that, for any vertex $u$, $u$'s $k$-edge-connected component, $kECC(u)$, might not induce a connected graph.
The first tool for proving \Cref{lemma:deterministic_kecs} is a ``local'' algorithm for finding a connected component of $G[kECC(u)]$.

\begin{lemma}
    Given a graph $G$ and a vertex $u$, there is a deterministic algorithm for finding the connected component $U$ containing $u$ of $G[kECC(u)]$ in $O(t_q \cdot \vol(U))$ time.
\end{lemma}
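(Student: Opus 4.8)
The plan is to compute $U$ by a localized breadth-first search (BFS) from $u$ that never steps outside $kECC(u)$, using the assumed data structure purely as a membership oracle. The enabling observation is that $kECC(u)$ is precisely the set of vertices that are $k$-edge-connected to $u$: since $V$ partitions into $k$-edge-connected components, $kECC(u)$ is the equivalence class of $u$, so for any $v$ we have $v \in kECC(u)$ if and only if $v$ is $k$-edge-connected to $u$. Thus a single query on the pair $(u,v)$ decides membership of $v$ in $kECC(u)$ in $t_q$ time, and we never need to compute $kECC(u)$ globally.

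Concretely, I would maintain a queue together with a visited flag per vertex, initialized with $u$ marked visited and enqueued. Whenever a vertex $w$ is dequeued, I scan each edge $(w,v)$ incident to $w$ in $G$; if $v$ is unvisited I mark it visited and issue one query asking whether $v$ is $k$-edge-connected to $u$, enqueuing $v$ exactly when the answer is yes (otherwise $v$ is discarded as a rejected boundary vertex). The returned set $U$ is the set of vertices that are ever enqueued. Correctness is a standard two-way inclusion: every enqueued vertex lies in $kECC(u)$ and is joined to $u$ by a path all of whose vertices were enqueued (hence lie in $kECC(u)$), so it belongs to the connected component of $G[kECC(u)]$ containing $u$; conversely, if $v$ lies in that component then there is a path from $u$ to $v$ inside $G[kECC(u)]$, and an easy induction along this path shows the search reaches $v$, since every intermediate vertex is in $kECC(u)$ and therefore passes its query and is expanded.

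For the running time, the crucial point---and the only place where care is needed---is that the search must touch only edges incident to $U$, never the rest of $G$. Because we expand (that is, scan the incident edges of) only those vertices we have already confirmed to lie in $U$, the total number of edge scans is exactly $\sum_{w \in U}\deg(w) = \vol(U)$, and each edge scan triggers at most one query, so at most $\vol(U)$ queries are made in total. Each query costs $t_q$ and each scan costs $O(1)$, giving an overall bound of $O(t_q \cdot \vol(U))$ time (with an additive $O(1)$ covering the degenerate case $\vol(U)=0$). The main obstacle to watch is precisely this charging argument: one must ensure each vertex is queried at most once, and that rejected boundary vertices---which may be adjacent to many vertices of $U$---are paid for by the edges of $U$ pointing to them, so that they do not inflate the running time beyond $\vol(U)$.
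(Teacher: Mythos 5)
Your proposal is correct and matches the paper's proof: both run a BFS from $u$ that uses a pairwise $k$-edge-connectivity query as a membership oracle for $kECC(u)$ (valid since $k$-edge-connectivity is an equivalence relation, so $kECC(u)$ is exactly the set of vertices $k$-edge-connected to $u$), expanding only confirmed vertices so that the work is charged to edges incident to $U$, giving $O(t_q \cdot \vol(U))$ time. Your write-up is simply a more detailed version of the paper's argument, making explicit the visited-flag bookkeeping and the charging of rejected boundary vertices to edges of $U$.
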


\begin{proof}
    We run BFS from $u$ to explore every vertex in the connected component $U$ containing $u$ of $kECC(u)$. During the BFS process, we only visit the vertices in $kECC(u)$ by checking if the newly found vertex is $k$-edge-connected to $u$. Since each edge incident to $U$ is visited at most twice, the total running time is $O(t_q \cdot \vol(U))$.
\end{proof}

Below, we describe the algorithm for \Cref{lemma:deterministic_kecs} in \Cref{algo:deterministic_kecs}
and then give the analysis.

\begin{algorithm}
    \caption{$\Main(G, L)$: compute the maximal $k$-edge-connected subgraphs}
    \label{algo:deterministic_kecs}
    \DontPrintSemicolon
    \SetAlgoLined
    \KwIn{An undirected connected graph $G = (V, E)$, and a list of vertices $L$ (initially $L = V$). Note that the parameters are passed by value.}
    \KwOut{The maximal $k$-edge-connected subgraphs of $G$.}
    
    $S \gets \varnothing$.\\
    \While{$\abs{L}>1$}{
        Choose an arbitrary pair $(u,v)\in L$.\\
        \eIf{$u$ and $v$ are $k$-edge-connected in $G$}{
            $L \gets L \setminus \{v\}$.\label{line:update L connected}\\
        }{
            Simultaneously compute the $u$'s connected component of $G[kECC(u)]$ and the $v$'s connected component of $G[kECC(v)]$, until the one with the smaller volume (denoted by $U$) is found.\label{line:find U}\\
            $S\gets S\cup\Main(G[U], U)$.\label{line:recurse}
            \label{line:update L cut}\\
            $G\gets G\setminus U$.\label{line:delete}\\
            $L\gets(L\setminus U)\cup\{w\notin U\mid(x,w)\in E(U,V(G)\setminus U)\}$.
        }
    }
    $S \gets S \cup \{V(G)\}$.\\
    \Return $S$.

\end{algorithm}

\paragraph{Correctness.}
We start with the following structural lemma.
\begin{lemma}
    [Lemma 5.6 of \cite{chechik2017faster}]\label{lem:key}Let $T$ be a $k$-cut in $G[C]$ for some vertex set $C$. Then, either
    \begin{itemize}
    \item $T$ is a $k$-cut in $G$ as well, or
    \item $T$ contains an endpoint of $E(C,V(G)\setminus C)$. 
    \end{itemize}
\end{lemma}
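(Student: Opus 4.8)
The plan is to count, for the candidate cut $T$, all edges of $G$ leaving $T$ and to sort them according to whether their other endpoint lies inside or outside $C$. Since $T\subseteq C$, we have the disjoint decomposition $V(G)\setminus T = (C\setminus T)\,\cup\,(V(G)\setminus C)$, and accordingly
\[
E(T, V(G)\setminus T) = E(T, C\setminus T)\ \cup\ E(T, V(G)\setminus C),
\]
a disjoint union of an ``internal'' part (the other endpoint stays in $C$) and an ``external'' part (the other endpoint leaves $C$). The whole proof will come out of comparing the size of this set with $k$.

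First I would identify the internal part with the edges crossing $T$ inside the induced subgraph $G[C]$. Because both $T$ and $C\setminus T$ are contained in $C$, every edge of $E(T, C\setminus T)$ has both endpoints in $C$ and is therefore an edge of $G[C]$; conversely every such edge of $G[C]$ lies in $E(T,C\setminus T)$. Hence this internal part is exactly the set witnessing that $T$ is a $k$-cut of $G[C]$, and the hypothesis gives $|E(T, C\setminus T)| < k$. This step also absorbs the trivial-cut convention: if $T = C$ then $C\setminus T = \varnothing$ and the internal count is simply $0 < k$, so no separate case is needed.

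The two alternatives then fall out of a single case split on the external part. If $E(T, V(G)\setminus C) = \varnothing$, then by the displayed decomposition $|E(T, V(G)\setminus T)| = |E(T, C\setminus T)| < k$, so $T$ is a $k$-cut in $G$ — the first alternative. Otherwise there is an edge $(x,w)$ with $x\in T$ and $w\in V(G)\setminus C$; since $x\in T\subseteq C$, this edge belongs to $E(C, V(G)\setminus C)$, and its endpoint $x$ lies in $T$, so $T$ contains an endpoint of $E(C, V(G)\setminus C)$ — the second alternative.

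I do not expect a substantial obstacle here, as the statement reduces to one clean counting decomposition; the argument is essentially bookkeeping. The only points that need care are keeping the convention that $C$ itself counts as a trivial $k$-cut of $G[C]$ (handled uniformly above, since the internal crossing count is then zero) and correctly equating $E(T, C\setminus T)$ with the crossing edges of $T$ \emph{within} $G[C]$ rather than within $G$, which is exactly where the containment $T, C\setminus T\subseteq C$ is used.
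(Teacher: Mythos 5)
Your proof is correct and complete: the disjoint decomposition $E(T, V(G)\setminus T) = E(T, C\setminus T) \cup E(T, V(G)\setminus C)$, together with the identification of the internal part with the crossing edges of $T$ within $G[C]$ and the observation that the trivial case $T=C$ gives internal count $0<k$, establishes exactly the stated dichotomy. The paper itself does not reprove this statement but imports it as Lemma~5.6 of Chechik et al.; your counting argument is precisely the standard proof of that lemma, so the approaches coincide.
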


Next, the crucial observation of our algorithm is captured by the following invariant.
\begin{lemma}
    \label{lem:invariant}At any step of \Cref{algo:deterministic_kecs}, every $k$-cut $T$ in $G$ is such that $T\cap L\neq\emptyset$.
\end{lemma}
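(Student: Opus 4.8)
The plan is to prove the invariant by induction on the operations performed within a single (top-level or recursive) call to $\Main$, showing it holds immediately after initialization and is preserved by each iteration of the while loop. Since every call begins with $L = V(G)$ (the top-level call sets $L=V$, and the recursive call on Line~\ref{line:recurse} passes $U = V(G[U])$), the base case is immediate: any $k$-cut $T$ is a nonempty subset of $V(G)$, so $T \cap L = T \neq \emptyset$; in particular the trivial cut $T = V(G)$ gives $T \cap L = L \neq \emptyset$ since the loop guard ensures $|L|\ge 1$. It remains to handle the two branches of the loop body. The final line $S \gets S \cup \{V(G)\}$ and the recursive call modify neither the current $G$ nor $L$, so they require no separate case.

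For the connected branch (Line~\ref{line:update L connected}), the graph $G$ is unchanged, so the family of $k$-cuts is the same and only $L$ loses the vertex $v$. The key fact I would invoke is that if $u$ and $v$ are $k$-edge-connected then no $k$-cut can separate them: a $k$-cut $T$ has fewer than $k$ crossing edges, whereas separating $u$ from $v$ requires at least $k$. Hence every $k$-cut $T$ contains both or neither of $u,v$. Suppose for contradiction that deleting $v$ breaks the invariant for some $k$-cut $T$; then the inductive hypothesis forces $T \cap L = \{v\}$, so $v \in T$, which forces $u \in T$. But $u \in L$ as well (the chosen pair lies in $L$), so $u \in T \cap L$ with $u \neq v$, contradicting $T \cap L = \{v\}$. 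Thus $T \cap (L \setminus \{v\}) \neq \emptyset$.

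For the cut branch, write $G' = G \setminus U = G[C]$ with $C = V(G) \setminus U$, and let $T'$ be any $k$-cut in $G'$, so $T' \subseteq C$. I would apply \Cref{lem:key} with this $C$: either $T'$ is already a $k$-cut in $G$, or $T'$ contains an endpoint of $E(C, V(G)\setminus C) = E(C, U)$. In the latter case, because $T' \subseteq C$, that endpoint is a vertex $w \in C$ incident to a crossing edge $(x,w)$ with $x \in U$ — that is, exactly one of the vertices $\{w \notin U \mid (x,w) \in E(U, V(G)\setminus U)\}$ that the algorithm reinserts into $L$; hence $w \in T' \cap L'$. In the former case, the inductive hypothesis on the old $G$ supplies some $z \in T' \cap L$; since $T' \subseteq C$ we have $z \notin U$, so $z$ survives in $L' \supseteq L \setminus U$, giving $z \in T' \cap L'$. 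Either way the invariant holds for $(G', L')$.

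The crux of the argument is this cut branch, where I must align the two alternatives of \Cref{lem:key} with the two reasons a witness can remain in $L$: a $k$-cut that persists into $G$ is covered by the old invariant, and its witness is not deleted precisely because it lies in $C$, while a genuinely new $k$-cut of $G'$ must touch the boundary $E(C,U)$ at a vertex that the $L$-update on Line~\ref{line:update L cut} deliberately reinserts. I would also verify the degenerate cases, namely the trivial cut $T' = C$ and the containment claims for witnesses; both reduce to the single observation that $T' \subseteq C$ together with the fact that every retained or inserted witness lies in $C$, so no element of $U$ is ever needed as a witness after $U$ is deleted.
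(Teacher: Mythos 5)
Your proof is correct and takes essentially the same approach as the paper's: induction over the two ways $L$ changes, handling the connected branch via the fact that no $k$-cut can separate a $k$-edge-connected pair (the paper's equivalent phrasing is that $v\in T$ forces $kECC(v)\subseteq T$), and the deletion branch via \Cref{lem:key}, matching a persisting $k$-cut of $G$ with its surviving old witness in $L\setminus U$ and a genuinely new cut with a reinserted boundary endpoint. The only difference is organizational: you apply \Cref{lem:key} up front and case on its two conclusions, whereas the paper first discharges cuts touching the boundary endpoints and then invokes the lemma --- logically identical.
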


\begin{proof}
    The base case is trivial because $L\gets V$ initially. Next, we prove the inductive step. $L$ can change in Line \ref{line:update L connected} or Line \ref{line:update L cut}.
    
    In the first case, the algorithm finds that $u$ and $v$ are $k$-edge-connected and removes $v$ from $L$. For any $k$-cut $T$ where $v\in T$, an important observation is that $kECC(v)\subseteq T$ as well. But $kECC(u)=kECC(v)$ and so $u\in T$ too. So the invariant still holds even after removing $v$ from $L$.
    
    In the second case, the algorithm removes $U$ from $G$. Let us denote $G'=G\setminus U$. Since the algorithm adds the endpoints of cut edges crossing $U$ to $L$, it suffices to consider a $k$-cut $T$ in $G'$ that is disjoint from the endpoints of the cut edges of $U$. By \Cref{lem:key}, $T$ was a $k$-cut in $G$. Since the changes in $L$ occur only at $U$ and neighbors of $U$, while $T$ is disjoint from both $U$ and all neighbors of $U$, we have $T\cap L\neq\emptyset$ by the induction hypothesis. 
    \end{proof}
\begin{corollary}\label{cor:stop k connected}
    When $|L|=1$, then $G$ is $k$-edge-connected. 
\end{corollary}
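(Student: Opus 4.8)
The goal is to prove \Cref{cor:stop k connected}: when the while-loop terminates with $|L|=1$, the current graph $G$ is $k$-edge-connected. The plan is to argue by contradiction using the invariant established in \Cref{lem:invariant}, together with the size property of $k$-cuts.

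Here is my reasoning. Suppose, toward a contradiction, that $G$ is \emph{not} $k$-edge-connected when $|L|=1$. By definition, a non-$k$-edge-connected graph admits a nontrivial $k$-cut, that is, a vertex set $T$ with $\emptyset \neq T \subsetneq V(G)$ and $|E(T, V(G)\setminus T)| < k$. (The trivial cut $T = V(G)$ is always a $k$-cut by the paper's convention, but a proper nontrivial one must exist precisely when $G$ fails to be $k$-edge-connected.) First I would invoke \Cref{lem:invariant}, which guarantees that every $k$-cut $T$ in the current graph $G$ satisfies $T \cap L \neq \emptyset$. Since this holds for \emph{every} $k$-cut and $G$ is not $k$-edge-connected, there is a nontrivial $k$-cut $T$, and both $T$ and its complement $V(G)\setminus T$ are $k$-cuts (the cut is symmetric: $|E(T, V(G)\setminus T)| = |E(V(G)\setminus T, T)| < k$). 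Applying the invariant to both $T$ and $V(G)\setminus T$ forces $L$ to intersect both sides of a proper partition of $V(G)$, which requires $|L|\geq 2$, contradicting $|L|=1$.

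To make this fully rigorous I would be careful about the convention that $V(G)$ itself counts as a $k$-cut, so that the invariant does not become vacuous. Concretely, the contradiction hinges on the fact that if $G$ is not $k$-edge-connected then there is a \emph{proper} nonempty $k$-cut $T \subsetneq V(G)$; then both $T$ and its nonempty complement are $k$-cuts, and by \Cref{lem:invariant} each must meet $L$. Because $T$ and $V(G)\setminus T$ are disjoint and both nonempty, having $L$ intersect each of them produces at least two distinct elements of $L$, i.e. $|L| \geq 2$. This is impossible when $|L|=1$, so $G$ must in fact be $k$-edge-connected.

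The main obstacle, though a minor one, is ensuring the existence of the proper $k$-cut and handling the degenerate case $|V(G)|=1$. When $|V(G)|=1$ the graph is trivially $k$-edge-connected by the paper's single-vertex convention, so the claim holds directly; otherwise $|V(G)|\geq 2$ and a failure of $k$-edge-connectivity genuinely yields a proper bipartition $\{T, V(G)\setminus T\}$ with cut size below $k$, which is exactly what the argument needs. Everything else follows immediately from \Cref{lem:invariant}, so the proof is essentially a one-line application of the invariant once the symmetry of cuts and the existence of a proper $k$-cut are spelled out.
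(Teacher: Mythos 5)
Your proof is correct and takes essentially the same approach as the paper: both argue by contradiction, taking a proper bipartition of $V(G)$ with fewer than $k$ crossing edges, observing that both sides are $k$-cuts, and applying \Cref{lem:invariant} to each side to force $|L| \geq 2$. Your extra care about the trivial cut $V(G)$ and the single-vertex case merely makes explicit what the paper leaves implicit.
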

\begin{proof}
    Otherwise, there is a partition $(A,B)$ of $V$ where $|E(A,B)|<k$. So both $A$ and $B$ are $k$-cuts in $G$. By \Cref{lem:invariant}, $A\cap L\neq\emptyset$ and $B\cap L\neq\emptyset$ which contradicts that $|L|=1$. 
\end{proof}

We are ready to conclude the correctness of \Cref{algo:deterministic_kecs}. At a high level, the algorithm finds the set $U$ and ``cuts along'' $U$ at Lines \ref{line:find U}. Then, on one hand, recurse on $U$ at Line \ref{line:recurse} and, on the other hand, continue on $V(G)\setminus U$. We say that the cut edges $E(U,V(G)\setminus U)$ are ``deleted''. 

Now, since $U$ is the connected component of $G[kECC(u)]$ for some vertex $u$. We have that, for every edge $(x,y)\in E(U,V(G)\setminus U)$, the pair $x$ and $y$ are not $k$-edge-connected in $G$. In particular, $x$ and $y$ are not $k$-edge-connected in $G[V']$ for every $V'\subseteq V$. 

Thus, the algorithm never deletes edges inside any maximal $k$-edge-connected subgraph $V_{i}$. Since the algorithm stops only when the remaining graph is $k$-edge-connected, the algorithm indeed returns the maximal $k$-edge-connected subgraphs of the whole graph.

\paragraph{Running Time.}

Consider the time spent on each recursive call. 
Let $G'$ be the graph for which the recursive call is made and $m' = \vol(G')$.
Every vertex is inserted to $L$ initially or as an endpoint of some removed edge, so the total number of vertices added to $L$ is $O(m')$.
In each iteration, either we remove a vertex from $L$, or remove a subgraph from $G$. Hence we check pairwise $k$-edge-connectivity $O(m')$ times, so the running time of checking pairwise $k$-edge-connectivity is $O(t_q \cdot m')$.
For the time of finding connected components of $k$-edge-connected components, since we spend $O(t_q \cdot \vol(U))$ time to find some $U$ and remove $U$ from $G$, the total cost is $O(t_q \cdot m')$.
Plus, initializing the dynamic pairwise $k$-edge connectivity algorithm on $G'$ takes $O(t_p \cdot m')$ time.
Thus the total running time of each recursive call is $O((t_p + t_q) \cdot m')$.

For the recursion depth, since each $U$ found has the smaller volume of the two, $\vol(U) \leq m'/2$. Hence the recursion depth is $O(\log m_0)$, where $n_0$ and $m_0$ are the numbers of vertices and edges of the initial graph.
Thus the total running time of \Cref{algo:deterministic_kecs} is $O((t_p + t_q) \cdot m_0 \log n_0)$.

By applying \cref{theorem:certificate} to the initial graph $G$ and invoking \Cref{algo:deterministic_kecs}  on the resulting graph, the number of edges in the resulting graph is $O(kn_0 \log n_0)$, so the running time is improved to 
$O(m_0 + (t_p + t_q) \cdot kn_0 \log^2 n_0)$.
This completes the proof of \Cref{lemma:deterministic_kecs}.

\section{The Decremental Algorithm}
\label{sec:dynamic}

Our static algorithm can be naturally extended to a decremental dynamic algorithm. To prove \Cref{thm:main dyn}, we prove the following reduction. By combining \Cref{lemma:dynamic_kecs} and \Cref{theorem:dynamic_kec}, we are done.

\begin{lemma} \label{lemma:dynamic_kecs}
  Suppose there is a deterministic decremental algorithm supporting pairwise $k$-edge-connectivity that has $t_p \cdot m$ total preprocessing and update time on an initial graph with $n$ vertices and $m$ edges and query time $t_q$. 
    
    Then there is a deterministic decremental dynamic algorithm for maintaining the maximal $k$-edge-connected subgraphs on an undirected graph of $n$ vertices and $m$ edges with
    $O((t_p + t_q) \cdot m \log n)$
    total preprocessing and update time, and $O(1)$ query time.

\end{lemma}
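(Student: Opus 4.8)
The plan is to reuse \Cref{algo:deterministic_kecs} essentially verbatim, but to observe that in the decremental setting we never need to re-initialize the dynamic $k$-edge-connectivity data structure, and more importantly that edge deletions in the outer graph can be handled by re-running the relevant portion of the static procedure. First I would set up the data structure once on the initial graph, paying $O(t_p \cdot m)$ for preprocessing. The key structural fact I would exploit is that the static algorithm only ever \emph{deletes} edges from $G$ (the cut edges $E(U, V(G)\setminus U)$ that were already certified to lie outside every maximal $k$-edge-connected subgraph) and recurses; it never inserts. This matches the decremental access pattern of the assumed data structure, so I can feed all of the static algorithm's internal ``$G \gets G \setminus U$'' operations directly to the single global data structure rather than spinning up a fresh copy per recursive call. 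That already removes the per-recursion $t_p$ factor and is what lets the preprocessing term appear only once as $t_p \cdot m$ rather than multiplied by the recursion depth.

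Next I would handle the external edge deletions. When an adversary deletes an edge $e=(x,y)$ from the current graph, the only maximal $k$-edge-connected subgraph whose structure can change is the one, say $V_i$, currently containing both $x$ and $y$; all other parts are unaffected because removing an edge cannot increase connectivity and cannot merge parts. So the plan is: forward the deletion of $e$ to the global data structure, and then re-examine $V_i$ by re-running the \Main\ loop restricted to the induced subgraph on $V_i$ (with $L$ re-seeded appropriately, e.g.\ to the endpoints $x,y$ together with the invariant-preserving neighbor set, so that \Cref{lem:invariant} continues to hold). The correctness of this localized recomputation rests on \Cref{lem:key} and \Cref{lem:invariant} exactly as in the static proof: any new $k$-cut created by deleting $e$ must touch $x$ or $y$, hence lies inside $V_i$, so re-processing $V_i$ alone suffices to restore the correct partition. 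I would maintain, for each vertex, a pointer to the part currently containing it, giving $O(1)$ query time as claimed.

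For the running-time accounting I would charge work to edges via an amortization argument. Each edge, over the entire sequence of deletions, can be ``cut along'' (i.e.\ moved from the interior of some part to the deleted set) at most once, and can be touched as a boundary/neighbor edge a bounded number of times per recursion level. Since every deletion triggers at most $O(\log n)$ levels of recomputation on strictly shrinking volumes (the smaller-volume side $U$ always satisfies $\vol(U) \le m'/2$, as in the static analysis), the total number of $k$-edge-connectivity queries and updates charged across the whole sequence is $O(m \log n)$, each costing $t_q$ or being a $t_p$-amortized update. Summing gives the claimed $O((t_p+t_q)\cdot m \log n)$ total preprocessing and update time.

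The main obstacle I expect is making the amortization rigorous across the \emph{sequence} of adversarial deletions rather than within a single static run: I must argue that re-running \Main\ on $V_i$ after each external deletion does not redo $\Theta(m)$ work each time, but only work proportional to the newly created structure. Concretely, the danger is that the boundary set added to $L$ (the neighbors of cut components) could be re-scanned repeatedly across many deletions. I would resolve this by a potential/charging scheme in which the decremental monotonicity is central — edges only leave the interior of parts and never return — so that the amortized cost of all recomputations telescopes to $O(m \log n)$ total; verifying that the seeding of $L$ after an external deletion both preserves \Cref{lem:invariant} and keeps each edge's cumulative charge bounded is the delicate point of the whole proof.
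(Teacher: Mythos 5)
Your proposal is correct and follows the paper's route in all essentials: upon deletion of an edge $e=(x,y)$ inside a part $V_i$, re-run $\Main$ on that part with $L$ seeded by the endpoints, justify the invariant of \Cref{lem:invariant} by observing that any $k$-cut in $G[V_i]\setminus e$ avoiding $\{x,y\}$ would already have been a $k$-cut in the $k$-edge-connected graph $G[V_i]$, and amortize via the volume-halving of the smaller side $U$, giving $O(\log n)$ charges of $O(t_p+t_q)$ per edge. Two remarks on where you deviate. First, the paper seeds $L=\{x,y\}$ exactly and proves this suffices; your hedge of additionally inserting an ``invariant-preserving neighbor set'' is unnecessary, and it is precisely the thing that could break your charging scheme if done at every deletion (boundary vertices re-scanned across many deletions --- the danger you yourself flag), so you should commit to the bare two-endpoint seeding, which makes the ``delicate point'' you worry about disappear: each $\Main$ re-invocation starts with $|L|=2$, and all further $L$-insertions are paid for by crossing edges, each deleted once, and by halvings, each occurring $O(\log n)$ times per edge. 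Second, your single-global-data-structure idea is a genuine (and valid) departure: the paper instead initializes a fresh decremental structure on each newly created part, paying $t_p\cdot\vol(U)$ charged against the halving, whereas you keep one structure for the whole lifetime and feed it only the crossing-edge deletions $E(U,V(G)\setminus U)$. This works because each part then becomes its own connected component (note $G[U]$ is connected, being a connected component of $G[kECC(u)]$), so intra-part pairwise queries in the global structure coincide with queries in the induced subgraph; it even sharpens the bound to $O(t_p m + t_q m\log n)$. But be careful with your phrasing that you ``feed the internal $G\gets G\setminus U$ operations'' to the structure: taken literally, removing the vertex set $U$ deletes all $\vol(U)$ incident edges, including those inside $G[U]$, which would destroy the very subgraph the recursion still has to query. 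You must delete only the crossing edges, and do so \emph{before} processing $U$, so that subsequent queries on the $U$ side are faithful to $G[U]$.
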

    
The algorithm for \Cref{lemma:dynamic_kecs} as is follows. 
First, we preprocess the initial graph $G_0$ using \Cref{algo:deterministic_kecs} and obtain the maximal $k$-edge-connected subgraphs $\{V_1,\dots,V_z\}$ of $G_0$.

Next, given an edge $e$ to be deleted, if $e$ is in a maximal $k$-edge-connected subgraph $V_i$ of $G$, then we invoke $\textsc{Update}(G[V_i], e)$ and update the set of the maximal $k$-edge-connected subgraphs of $G$; otherwise we ignore $e$.
The subroutine $\textsc{Update}(H, e)$ is described in \Cref{algo:update_kecs}.

\begin{algorithm}
\caption{\textsc{Update}(H, e)}
\label{algo:update_kecs}
\DontPrintSemicolon
\SetAlgoLined
\KwIn{A $k$-edge-connected subgraph $H$ and an edge $e = (x, y) \in H$ to be deleted.}
\KwOut{The $k$-edge-connected subgraphs of $H$ after deletion.}
$H \gets H \setminus \{(x, y)\}$.\\
\Return $\Main(H, \{x, y\})$.\\

\end{algorithm}

\paragraph{Correctness.}

Let $H = (V', E')$ be the maximal $k$-edge-connected subgraph containing edge $(x, y)$ before deletion. 
It suffices to prove that \Cref{lem:invariant} holds when we invoke \Cref{algo:deterministic_kecs}. Suppose there is a $k$-cut $C$ in $H \setminus \{(x, y)\}$ such that $C \cap \{x, y\} = \emptyset$, then $C$ is also a $k$-cut in $H$, a contradiction.
Hence the correctness follows from the correctness of \Cref{algo:deterministic_kecs}.

\paragraph{Running Time.}

In the case that $H \setminus \{(x, y)\}$ is still $k$-edge-connected, the running time is $t_q$. We charge this time $t_q$ to the deleted edge $(x, y)$.

Otherwise, consider the time spend on each recursive call of $\Main$. Assume that the total volume of the subgraphs removed and passed to another recursive call in a recursive call is $\nu$. The total number of vertices added to $L$ is $O(\nu)$. 
In each iteration, we either remove a vertex from $L$ or remove a subgraph. Hence we check pairwise $k$-edge-connectivity $O(\nu)$ times, so the running time of checking pairwise $k$-edge-connectivity is $O(t_q \cdot \nu)$.
Since we spend $O(t_q \cdot \vol(U))$ time to find $U$, the total cost is $O(t_q \cdot \nu)$. 
Plus, it takes $O((t_p + t_q)\cdot m')$ time to initialize the dynamic pairwise $k$-edge connectivity algorithm and check pairwise $k$-edge-connectivity on a graph $H'$ with $m'$ edges for the first time we invoke $\Main$ on $H'$.
Also, removing all edges from $H'$ takes $t_p \cdot m'$ time.
We charge $O(t_p + t_q)$ to each of the removed edges in each recursive call.

The recursion depth is $O(\log m_0)$ by \Cref{lemma:deterministic_kecs}, where $n_0$ and $m_0$ are the numbers of vertices and edges of the initial graph. Hence each edge will be charged $O(\log m_0)$ times, so the total preprocessing and update time is $O((t_p + t_q ) \cdot m_0 \log n_0)$.

\appendix

\section{Relationship with \texorpdfstring{$k$}{k}-Edge-Connected Components}
\label{sec:discuss}
The reason why a subroutine for computing $k$-edge-connected components is not useful for computing maximal $k$-edge-connected subgraphs is as follows. 
Given a graph $G = (V,E)$, we can artificially create a supergraph $G' = (V' \supseteq V, E' \supseteq E)$ where the whole set $V$ is $k$-edge-connected, but the maximal $k$-edge-connected subgraphs of $G'$ will reveal the maximal $k$-edge-connected subgraphs of $G$. So given a subroutine for computing the $k$-edge-connected components of $G'$, we know nothing about the maximal $k$-edge-connected subgraphs of $G$. The construction of $G'$ is as follows. 

First, we set $G' \gets G$. Assume $V = \{1, 2, \ldots, n\}$. For every $1 \leq i < n$, add $k$ parallel dummy length-$2$ paths $(i, d_{i,1}, i+1) , \ldots , (i, d_{i,k}, i+1)$. Thus $i$ and $i+1$ are $k$-edge-connected, so $V$ is $k$-edge-connected at the end. 
When we compute the maximal $k$-edge-connected subgraphs of $G'$, we know that we will first remove all dummy vertices $d_{i,j}$ because they all have degree $2$ (assuming that $k>2$). We will obtain $G$ and so we will obtain the maximal $k$-edge-connected subgraphs of $G$ from this process.

\newpage
\bibliographystyle{alpha}
\bibliography{main.bib}

\end{document}